\newcounter{remark}[section]
\def\claim{\par\medskip\noindent\refstepcounter{remark}\hbox{\bf Remark \arabic{section}.\arabic{remark}}
	\ 
}
\def\endclaim{
	\par\medskip}
\newenvironment{remark}{\claim}{\endclaim}
\newcommand{\s}{\mathbf{s}}
\newtheorem{lemma}{Lemma}
\newtheorem{assumption}{Assumption}[section]
\newtheorem{thm}{Theorem}[section]
\def\endpf{{\ \hfill\hbox{\vrule width1.0ex height1.0ex}\parfillskip 0pt
	}}
	\newenvironment{proof}{\noindent{\bf Proof:}}{\endpf}
\begin{document}
		\title{Are Thousands of Samples Really Needed to Generate Robust Gene-List for Prediction of Cancer Outcome?}
		\author{Royi Jacobovic}\footnote{Comments may be sent to \textit{royi.jacobovic@mail.huji.ac.il.}}
		\maketitle
		
		\begin{abstract}
			The prediction of cancer prognosis and metastatic potential immediately after the initial diagnoses is a major challenge in current clinical research. The relevance of such a signature is clear, as it will free many patients from the agony and toxic side-effects associated with the adjuvant chemotherapy automatically and sometimes carelessly subscribed to them. Motivated by this issue, Ein-Dor (2006) and Zuk (2007) presented a Bayesian model which leads to the following conclusion: Thousands of samples are needed to generate a robust gene list for predicting outcome. This conclusion is based on existence of some statistical assumptions. The current work raises doubts over this determination by showing that: (1) These assumptions are not consistent with additional assumptions such as sparsity and Gaussianity. (2) The empirical Bayes methodology which was suggested in order to test the relevant assumptions doesn't detect severe violations of the model assumptions and consequently an overestimation of the required sample size might be incurred.   	
		\end{abstract}
	
		\section{Introduction}
		In clinical research, the need for sensitive and reliable predictors of outcome
		is most acute for early discovery of metastases. In the recent decades,
		gene-expression data is available and can be used for this purpose. From
		a statistical point of view, this kind of data is hard to analyze because the
		number of genes is up to tens of thousands while the measurements of the
		gene-expressions are associated with non-negligible observational noise. In
		particular, it may be hard to pinpoint the most predictive genes. Motivated
		by this issue, \cite{ein-dor} and \cite{zuk} suggested a Bayesian modelling of the observational
		noise and then, with regard to this modelling, the conclusion was that thousands
		of samples are needed to generate a robust gene-list for predicting outcome. The current work reveals that the statistical framework which was used in order to derive this conclusion is inconsistent with assumptions like sparsity and Gaussianity. To motivate this theoretical result, observe that sparsity and Gaussianity are  commonly assumed by practitioners. For instance, some applications of sparse models to micro-array data analysis are e.g. \cite{cara} and \cite{knowels}. Similarly, examples of applications which are using Gaussian models are given by e.g. \cite{ghosh} and \cite{yeung}.
		To complete the picture, another
		topic to be discussed is the empirical Bayes (EB) methodology which was implemented by \cite{ein-dor} and \cite{zuk} in order to test the relevant model assumptions. In this context, the current work presents the results of a simulation analysis which demonstrates a case such that
		the EB testing methodology doesn't detect severe violations of the model assumptions and consequently an
		overestimation of the needed sample size is incurred.
		The rest is organized as follows: Section \eqref{sec:2} presents a detailed
		description of the statistical model which was phrased by \cite{ein-dor} and \cite{zuk}. Section \eqref{sec:3}
		includes rigorous statements of the claims of this work along with their proofs. Namely, these claims specify the exact notions of sparsity and Gaussianity under
		which inconsistencies with the model assumptions take place. Section \eqref{sec:4} uses Monte-Carlo (MC) simulation in order to analyse a specific setup which is associated with severe violations of the model assumptions. Then, despite of these violations, it is shown that the above-mentioned EB approach for testing the model assumptions
		doesn't alert the user and as a result
		too pessimistic estimates of the needed sample-size are obtained.
		Finally, Section \eqref{sec:5} is a brief summary about the contributions of
		this work with suggestions for further research.
		
		\section{Model Description}\label{sec:2}
		Let $(X_{1j},\ldots,X_{kj}, Y_j )\stackrel{i.i.d}{\sim}
		(X_1, \ldots , X_k, Y ), j = 1, . . . , n$ be an i.i.d sample
		of $n=4,5,\ldots$ observations from some $(k + 1)$-dimensional parametric multivariate
		distribution $F_\theta; \theta\in\Theta \subseteq\mathbb{R}^m;m\in\mathbb{N}$, such that $\theta\sim G$ where $G$ is the prior distribution of the model. It is assumed that almost surely $F_\theta$ is associated with finite first two moments. Now, the statistical terminology that $X_1,\ldots , X_k$
		are features and $Y$ is the target variable is adapted. Notice that in the context of statistical analysis of micro-array datasets, the features are the genes and the target variable is the survival status. In
		particular, the datasets which were used by \cite{ein-dor} and \cite{zuk} are characterized by survival status which is
		a binary variable. With respect to this terminology, as pointed by \cite{donoho}, if the number of
		observations is less than the number of features, then the statistician confronts with the curses and blessings of dimensionality. One possible approach to handle such circumstances is to choose the best features to
		explain Y, i.e. to perform some feature selection procedure. In practice, such selection may be done according to the absolute values of the following Pearson coefficients of
		correlation
		\begin{equation*}
		\rho_i(\theta):=\rho(X_i,Y;\theta)=\frac{C_\theta(X_i,Y)}{\sqrt{V_\theta(X_i)V_\theta(Y)}} \ \ , \ \ i=1,\ldots,k
		\end{equation*}
		where $C_\theta(\cdot)$ and $V_\theta(\cdot)$ respectively symbol the covariance and variance operators with respect to the parametrization $\theta\in\Theta$. Few examples for applications of this class of procedures are \cite{guyon}, \cite{hall} and \cite{yu}. In-addition, in order to streamline the presentation of the contents of this work, the notation of $\theta$ beneath the variance and covariance operators is discarded. 
		
		Now, considering the fact that the statistician has no direct access to the true correlations, \cite{ein-dor} and \cite{zuk} considered a setup such that $k>>1$ ($k=20000$ in humans) and  suggested an easy to implement method to evaluate the number of observations which is required in order to obtain a robust list of features whose absolute values of the true correlations with the target variable are the highest. In details, they defined robust gene list as a list having at least 50\% overlap with the list of the $l$ genes with the highest absolute correlations with the survival status where $l$ is of the order of few dozens. To proceed, denote the sample analogues of
		$\rho_1,\ldots,\rho_k$ by $r_1^n,\ldots,r_k^n$
		and recall that Fisher's transformation (see \cite{fisher1915,fisher1921}) is defined by
		\begin{equation}
		\phi(h):=\frac{1}{2}\ln\frac{1+h}{1-h} \ \ , \ \ \forall h\in(-1,1)\,.
		\end{equation}
		The model assumptions are as follows:
		
		\newpage
		
		\begin{assumption} \label{ass:1}
			$G$ is such that $\phi(\rho_1),\ldots, \phi(\rho_k)\stackrel{i.i.d}{\sim}
			N(\theta, \sigma^2_
			q)$,
			where $\sigma^2_q\in(0,\infty)$ is known parameter. \footnote{
				\cite{ein-dor} assumes that the distribution of the Fisher transformations of the true correlations
				can be approximated by centred Gaussian distribution with variance $\sigma^2_q\in(0,\infty)$. Since the
				notion of approximation is not mentioned by \cite{ein-dor}, the above-mentioned Assumption \eqref{ass:1} is taken
				from \cite{zuk}. In addition, notice that \cite{zuk} considers a more general settings by letting $\phi(\rho_1), \ldots, \phi(\rho_k)
				\stackrel{i.i.d}{\sim}
				q(\cdot)$ where $q(\cdot)$ is a general density.
			}
		\end{assumption}
		
		\begin{assumption} \label{ass:2}
			For almost any parametrization $\theta\in\Theta$ w.r.t. $G$, $\phi(r_1^n),\ldots,\phi(r_k^n)$ are asymptotically
			independent random variables as $n\rightarrow\infty$.\footnote{Assumption \eqref{ass:2} is a relaxed version of the independence assumption which was made by \cite{ein-dor} and \cite{zuk}}
		\end{assumption}
		
		\begin{assumption} \label{ass:3}
			For almost any parametrization $\theta\in\Theta$ w.r.t. $G$, 
			$\sqrt{n-3}\big(\phi(r^n_i)-\phi(\rho_i)\big)\xrightarrow{\mathcal{L}}N(0,1),\forall i = 1, \ldots , k$.\footnote{The notation $\xrightarrow{\mathcal{L}}$ refers to convergence in law of a sequence of random variables. Exact definition is provided by \cite{ferg}.} \footnote{ Assumption \eqref{ass:3} is a relaxed version of the convergence assumption which was made by \cite{ein-dor} and \cite{zuk}.}
		\end{assumption}
		
		\section{Implications of Model Assumptions}\label{sec:3}
		To start with, denote the correlation between $X_i$ and
		$X_j$ $(i, j = 1, . . . , k)$ by $\rho_{x_ix_j}$. Using this notation, since features which are totally correlated are almost surely identical up to multiplication of a non-zero constant, there is no loss of generality by assuming that almost surely for all $1\leq i<j\leq k$, $|\rho_{x_ix_j}|<1$. Now, the next theorem states that if Assumption \eqref{ass:1} holds, then there is no pair of genes whose correlation is fixed almost surely. Thus, as a result of Assumption \eqref{ass:1}, any pair of genes is correlated with positive probability. 
		\begin{thm} \label{thm:sparsity}
			Let $c\in(-1, 1)$ . If $G$ is a prior distribution which satisfies Assumption
			\eqref{ass:1}, then for any $1\leq i<j\leq k$
			
			\begin{equation*}
			\mathbb{P}\{\rho_{x_ix_j}=c\}<1
			\end{equation*} 	
		\end{thm}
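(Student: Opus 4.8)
The plan is to argue by contradiction and to exploit the fact that any genuine correlation matrix is positive semidefinite. Suppose, toward a contradiction, that for some fixed $c\in(-1,1)$ and some pair $i<j$ one had $\mathbb{P}\{\rho_{x_ix_j}=c\}=1$. For (almost) every parametrization $\theta$ the triple $X_i,X_j,Y$ possesses a $3\times3$ correlation matrix, and since it is the correlation matrix of honest random variables it must be positive semidefinite; in particular its determinant is nonnegative.

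First I would write this matrix with off-diagonal entries $\rho_{x_ix_j}=c$, $\rho_i=\rho(X_i,Y)$ and $\rho_j=\rho(X_j,Y)$ and read off the determinant inequality
\[
1-c^2-\rho_i^2-\rho_j^2+2c\,\rho_i\rho_j\ \ge\ 0 .
\]
Because $|c|<1$, the quadratic form $\rho_i^2+\rho_j^2-2c\,\rho_i\rho_j$ is positive definite, so this inequality forces the random pair $(\rho_i,\rho_j)$ to lie, almost surely, inside a bounded closed elliptical region $E_c\subset(-1,1)^2$.

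Next I would check that $E_c$ does not fill the whole square. Evaluating the quadratic form at the corner $(1,1)$ gives $2(1-c)$, which is strictly larger than $1-c^2=(1-c)(1+c)$ exactly because $c<1$; hence $(1,1)\notin E_c$, and as $E_c$ is closed there is a non-empty open set $U\subset(-1,1)^2$ disjoint from $E_c$. Finally I would invoke Assumption \eqref{ass:1}: since $\phi$ is a continuous bijection of $(-1,1)$ onto $\mathbb{R}$ and $\phi(\rho_i),\phi(\rho_j)$ are Gaussian with variance $\sigma^2_q>0$, their joint law has full support $\mathbb{R}^2$ (this holds whether or not the common mean is itself treated as random). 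Transporting this support through $\phi^{-1}$ shows that $(\rho_i,\rho_j)$ assigns positive probability to every non-empty open subset of $(-1,1)^2$, in particular to $U$. This contradicts the almost-sure membership $(\rho_i,\rho_j)\in E_c$, and the contradiction yields $\mathbb{P}\{\rho_{x_ix_j}=c\}<1$.

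The routine parts are the determinant computation and the support statement for a nondegenerate Gaussian. I expect the main obstacle to be the geometric step: pinning down that the positive-semidefiniteness region $E_c$ is a \emph{proper} subset of the square for every admissible $c$, i.e.\ exhibiting a point of $(-1,1)^2$ that lies outside $E_c$ uniformly in $c\in(-1,1)$. The corner $(1,1)$ does the job here, but verifying that such a witness always exists — and hence that the full-support Gaussian must place mass where the correlation constraint forbids it — is the crux of the argument.
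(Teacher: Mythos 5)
Your proof is correct, and it reaches the contradiction by a genuinely different certificate of non-positive-semidefiniteness than the paper does. The paper's proof works with the characteristic polynomial of the $3\times 3$ correlation matrix: it picks the $c$-dependent witness $(\rho_1,\rho_2)=\bigl(0,\sqrt{(2-c^2)/2}\bigr)$, exhibits an explicit negative eigenvalue there, and then invokes continuity of cubic roots (via Cardano's formula) to inflate this into an open ball of $(\rho_1,\rho_2)$ values on which the matrix fails to be PSD; the conclusion then follows, as in your argument, from the fact that Assumption \eqref{ass:1} gives $(\rho_1,\rho_2)$ a continuous (full-support) law on $(-1,1)^2$. You instead use only the necessary condition $\det\geq 0$ for a PSD matrix, which reduces the whole question to the explicit scalar inequality $1-c^2-\rho_i^2-\rho_j^2+2c\rho_i\rho_j\geq 0$; the corner $(1,1)$, where the determinant equals $-(1-c)^2<0$, serves as a witness \emph{uniformly} in $c\in(-1,1)$, and closedness of $E_c$ plus full support of the transported Gaussian finishes the job. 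Your route is the more elementary one: it avoids Cardano entirely, replaces root-continuity by continuity of a polynomial, and needs no $c$-dependent construction. Two harmless nits: $E_c$ is contained in $[-1,1]^2$ rather than in the open square (points such as $(1,c)$ lie on its boundary), which does not affect the argument since you only need an open subset of $(-1,1)^2$ disjoint from $E_c$; and since $(1,1)$ is a boundary corner of the square, the open set $U$ is obtained by intersecting a neighborhood of $(1,1)$ in the complement of $E_c$ with $(-1,1)^2$ --- exactly what your closedness remark delivers. Also, your parenthetical that full support holds whether or not the common mean is random is a nice touch, since the paper's Assumption \eqref{ass:1} is stated with mean $\theta$ while its proof uses the centered version.
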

		
		\begin{proof}
			Due to symmetry considerations, it is enough to prove that 
			
			\begin{equation*}
			\mathbb{P}\{\rho_{x_1x_2}=c\}<1\ \ , \ \ \forall c\in(-1,1)\,. 
			\end{equation*}
			To this end, assume by contradiction that there exists some $c\in(-1, 1)$
			such that $\mathbb{P}\{\rho_{x_1x_2}=c\}=1$. Since $G$ is a probability distribution over $\Theta$ and it is known that almost surely $F_\theta$ is associated with finite first two moments,
			then the probability (with respect to $G$) that the correlation
			matrix of $(X_1,X_2,Y)$ is positive semi-definite, equals to one. To obtain a contradiction, it is shown that the characteristic polynomial of this correlation matrix is associated with negative roots with positive probability. In details, since
			$\mathbb{P}\{\rho_{x_1x_2}=c\}=1$, then almost surely the characteristic polynomial of the correlation matrix of $(X_1,X_2,Y)$ is given by
			\begin{equation*}
			P(\lambda;\rho_1,\rho_2,c)=\det\begin{bmatrix}
			1-\lambda & c & \rho_1 \\ 
			c & 1-\lambda & \rho_2 \\
			\rho_1 & \rho_2 & 1-\lambda
			\end{bmatrix}=
			\end{equation*}
			
			\begin{equation*}
			=(1-\lambda)^3-(1-\lambda)(\rho_1^2+\rho_2^2+c^2)-2c\rho_1\rho_2\,.
			\end{equation*}
			Now, set $\rho_1=0$ and obtain the following equation:
			\begin{equation*}
			P(\lambda;\rho_1=0,\rho_2,c)=(1-\lambda)^3-(1-\lambda)(c^2+\rho_2^2)=0 \,.
			\end{equation*}
			Clearly, if $\rho_2=
			\sqrt{\frac{2-c^2}{2}}\in(-1,1)$, then one root of $P(\lambda)$ is given by
			\begin{equation*}
			\hat{\lambda}=1-\sqrt{1+\frac{c^2}{2}}<0
			\end{equation*} 
			, i.e. there exists a negative solution to the equation 
			\begin{equation*}
			P\bigg(\lambda;\rho_1=0,\rho_2=\sqrt{\frac{2-c^2}{2}},c\bigg)=0\,.
			\end{equation*}
			Since Cardano formula \footnote{For details about Cardano formula, look at \textit{mathworld.wolfram.com/CubicFormula.html}.}  implies that
			the solutions of the equation $P(\lambda; \rho_1, \rho_2,c) = 0$ are continuous in $(\rho_1, \rho_2)$
			at the point $p:=\big(0,\sqrt{\frac{2-c^2}{2}}\big)$, there exists $\delta>0$ which is associated with a ball $B_\delta(p)\subseteq(-1,1)^2$ such that for any $(\rho_1,\rho_2)\in B_\delta(p)$ there exists a negative root for $P(\lambda;\rho_1,\rho_2,c)$. In addition, the fact that $\phi(\cdot)$ is strictly increasing continuous function and $\phi(\rho_1), \phi(\rho_2)\stackrel{i.i.d}{\sim}N(0, \sigma^2_q)$ implies that $(\rho_1,\rho_2)$ are continuously distributed over $(-1, 1)^2$. Therefore, with positive probability, $P(\lambda;\rho_1,\rho_2,c)$ is associated with negative root.   
			
		\end{proof}
		
		\begin{lemma} (\textbf{Multivariate Delta Method})
			If $\Psi$ is $m$-dimensional positive definite matrix
			and $\{\hat{\beta}_n\}_{n=1}^\infty$ is a consistent sequence of estimators of a parameter vector $\beta\in\mathbb{R}^m$ such that
			\begin{equation*}
			\sqrt{n}\big(\hat{\beta}_n-\beta\big)\xrightarrow{\mathcal{L}}N(0,\Psi)
			\end{equation*}
			, then for any differentiable function $f:\mathbb{R}^m\rightarrow\mathbb{R}^d$, the following convergence holds:
			\begin{equation*}
			\sqrt{n}\big(f(\hat{\beta}_n)-f(\beta)\big)\xrightarrow{\mathcal{L}}N\bigg(0,[\nabla f(\beta)]^T\Psi[\nabla f(\beta)]\bigg)
			\end{equation*}
			where $\nabla f(\tilde{\beta})$ is the partial derivative matrix of $f(\cdot)$ at the point $\tilde{\beta}\in\mathbb{R}^m$.
		\end{lemma}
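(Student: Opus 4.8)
The plan is to reduce the statement to a first-order Taylor expansion of $f$ about $\beta$ combined with Slutsky's theorem. First I would use differentiability of $f$ at $\beta$ to write, for every $x$ in a neighbourhood of $\beta$,
\begin{equation*}
f(x)=f(\beta)+[\nabla f(\beta)]^T(x-\beta)+R(x)\,,
\end{equation*}
where the remainder obeys $\|R(x)\|/\|x-\beta\|\to 0$ as $x\to\beta$. Substituting $x=\hat{\beta}_n$ and multiplying by $\sqrt{n}$ produces the decomposition
\begin{equation*}
\sqrt{n}\big(f(\hat{\beta}_n)-f(\beta)\big)=[\nabla f(\beta)]^T\sqrt{n}(\hat{\beta}_n-\beta)+\sqrt{n}\,R(\hat{\beta}_n)\,,
\end{equation*}
so it suffices to handle the two summands on the right separately. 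For the linear term, the map $z\mapsto[\nabla f(\beta)]^Tz$ is continuous, so the continuous mapping theorem together with $\sqrt{n}(\hat{\beta}_n-\beta)\xrightarrow{\mathcal{L}}N(0,\Psi)$ gives convergence in law to the law of $AZ$ with $A=[\nabla f(\beta)]^T$ and $Z\sim N(0,\Psi)$; since a linear image of a Gaussian vector satisfies $AZ\sim N(0,A\Psi A^T)$, this limit is precisely $N\big(0,[\nabla f(\beta)]^T\Psi[\nabla f(\beta)]\big)$.

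The step I expect to be the main obstacle is showing that the remainder is asymptotically negligible, i.e. $\sqrt{n}\,R(\hat{\beta}_n)\xrightarrow{P}0$. Here I would write, on the event $\{\hat{\beta}_n\neq\beta\}$ (the complement being of negligible probability and handled trivially),
\begin{equation*}
\sqrt{n}\,R(\hat{\beta}_n)=\frac{\|R(\hat{\beta}_n)\|}{\|\hat{\beta}_n-\beta\|}\cdot\sqrt{n}\,\|\hat{\beta}_n-\beta\|\,,
\end{equation*}
and argue that the second factor is $O_P(1)$ because $\|\sqrt{n}(\hat{\beta}_n-\beta)\|$ converges in law and is therefore bounded in probability. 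For the first factor I would define $g(x):=\|R(x)\|/\|x-\beta\|$ for $x\neq\beta$ and $g(\beta):=0$; by differentiability $g$ is continuous at $\beta$, and consistency $\hat{\beta}_n\xrightarrow{P}\beta$ then forces $g(\hat{\beta}_n)\xrightarrow{P}0$ through the continuous mapping theorem. As a product of an $o_P(1)$ and an $O_P(1)$ sequence, the remainder converges to zero in probability.

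Finally I would invoke Slutsky's theorem to add the two pieces: the sum of a sequence converging in law to $N\big(0,[\nabla f(\beta)]^T\Psi[\nabla f(\beta)]\big)$ and a sequence converging in probability to $0$ converges in law to that same Gaussian, which is exactly the asserted conclusion. The only genuinely delicate point is the bookkeeping in the remainder step, where the purely analytic differentiability estimate must be converted into a probabilistic negligibility statement; the positive definiteness of $\Psi$ is not essential to the argument — positive semi-definiteness would suffice — and it serves mainly to guarantee that the input limit law $N(0,\Psi)$ is non-degenerate.
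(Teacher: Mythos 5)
Your proof is correct, but note that the paper does not actually prove this lemma at all---its ``proof'' is the single line ``See chapter 7 of \cite{ferg}.'' Your first-order Taylor decomposition, the continuous-mapping step for the linear term, the $o_P(1)\cdot O_P(1)$ treatment of the remainder via $g(x)=\|R(x)\|/\|x-\beta\|$ with $g(\beta)=0$, and the final appeal to Slutsky constitute precisely the standard argument found in that reference (Cram\'er's theorem in Ferguson), so in effect you have supplied the details the paper outsources. Two small remarks: the event $\{\hat{\beta}_n=\beta\}$ need not have negligible probability, but this is harmless since $R(\hat{\beta}_n)=0$ there and your convention $g(\beta)=0$ already covers it; and your closing observations are accurate---positive semi-definiteness of $\Psi$ suffices, and indeed the consistency hypothesis is itself redundant, since $\sqrt{n}\big(\hat{\beta}_n-\beta\big)\xrightarrow{\mathcal{L}}N(0,\Psi)$ already forces $\hat{\beta}_n-\beta=O_P(n^{-1/2})\xrightarrow{P}0$.
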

		\begin{proof}
			See chapter 7 of \cite{ferg}.
		\end{proof}
		\begin{thm} \label{thm:sufficient condition}
			If 
			$(X_1,\ldots,X_k,Y;\theta)\sim F_\theta$ such that almost surely $F_\theta$ is associated with finite first
			four moments and mean zero, then Assumption \eqref{ass:2} holds iff the event that for any $1 \leq i <
			j \leq k$ 
			
			\begin{equation} \label{eq:sufficient conditions}
			-\frac{\rho_j}{2\sigma^2_{x_j}}\bigg[-\frac{\rho_i}{2\sigma^2_{x_i}}C(X_i^2,X_j^2)-\frac{\rho_i}{2\sigma^2_y}C(Y^2,X_j^2)+\frac{1}{\sigma_{x_i}\sigma_y
			}C(X_iY,X_j^2)\bigg]-
			\end{equation}
			\begin{equation*}
			-\frac{\rho_j}{2\sigma^2_y}\bigg[-\frac{\rho_i}{2\sigma^2_{x_i}}C(X_i^2,Y^2)-\frac{\rho_i}{2\sigma^2_y}C(Y^2,Y^2)+\frac{1}{\sigma_{x_i}\sigma_y
			}C(X_iY,Y^2)\bigg]+
			\end{equation*}
			\begin{equation*}
			+\frac{1}{\sigma_{x_j}\sigma_y}\bigg[-\frac{\rho_i}{2\sigma^2_{x_i}}C(X_i^2,X_jY)-\frac{\rho_i}{2\sigma^2_y}C(Y,X_jY)+\frac{1}{\sigma^2_{x_i}\sigma_y
			}C(X_iY,X_jY)\bigg]=0
			\end{equation*}
			
			occurs with probability one.
			
		\end{thm}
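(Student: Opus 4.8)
The plan is to recognize the left-hand side of \eqref{eq:sufficient conditions} as the off-diagonal entry of the asymptotic covariance matrix produced by the Multivariate Delta Method, so that Assumption \eqref{ass:2} reduces to the statement that this entry vanishes. Fix a parametrization $\theta$ (the finite-fourth-moment and mean-zero properties hold for $G$-almost every $\theta$) and a pair $1\le i<j\le k$. First I would write each sample correlation as a smooth function of empirical moments,
\begin{equation*}
r_i^n=\frac{\overline{X_iY}-\bar X_i\bar Y}{\sqrt{(\overline{X_i^2}-\bar X_i^2)(\overline{Y^2}-\bar Y^2)}}\,,
\end{equation*}
and similarly for $r_j^n$, where bars denote sample averages. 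Because the true means vanish, at the limiting moment vector the partial derivatives of $r_i^n$ with respect to the first-moment coordinates $\bar X_i,\bar Y$ are zero; hence only the second-moment statistics $\overline{X_i^2},\overline{Y^2},\overline{X_iY},\overline{X_j^2},\overline{X_jY}$ enter the delta-method computation. Finiteness of the first four moments guarantees that these quadratic statistics have finite variance.

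Next I would apply the multivariate central limit theorem to the vector $\hat\beta_n$ of these second-moment statistics, obtaining $\sqrt n(\hat\beta_n-\beta)\xrightarrow{\mathcal L}N(0,\Psi)$, where $\beta$ collects the true second moments and $\Psi$ is the covariance matrix whose entries are the covariances of the products $X_i^2,Y^2,X_iY,X_j^2,X_jY$ — precisely the quantities $C(X_i^2,X_j^2),C(Y^2,X_j^2),C(X_iY,X_j^2),\ldots$ appearing in \eqref{eq:sufficient conditions}. Writing $f=(\phi\circ g_i,\phi\circ g_j)$, where $g_i,g_j$ map the moment vector to the respective correlations, the Multivariate Delta Method yields joint asymptotic normality of $\sqrt n\big((\phi(r_i^n),\phi(r_j^n))-(\phi(\rho_i),\phi(\rho_j))\big)$ with limiting covariance $[\nabla f]^T\Psi[\nabla f]$. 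Since the $(n-3)$-scaling of Assumption \eqref{ass:3} differs from $n$ only by a factor tending to $1$, the same Gaussian limit governs the normalized errors. As the limit is jointly Gaussian and non-degenerate, asymptotic independence of the fluctuations is equivalent to the vanishing of the off-diagonal entry $[\nabla(\phi\circ g_i)]^T\Psi[\nabla(\phi\circ g_j)]$; running this over all pairs and over $G$-almost every $\theta$ shows Assumption \eqref{ass:2} holds iff every such entry is zero almost surely.

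Finally I would evaluate the gradients. By the chain rule $\nabla(\phi\circ g_i)=\phi'(\rho_i)\nabla g_i$ with $\phi'(\rho_i)=(1-\rho_i^2)^{-1}>0$, and likewise for $j$; since these scalar factors are strictly positive, the off-diagonal entry vanishes iff $[\nabla g_i]^T\Psi[\nabla g_j]=0$. A direct differentiation gives, at the limiting moment vector,
\begin{equation*}
\frac{\partial r_i}{\partial \overline{X_i^2}}=-\frac{\rho_i}{2\sigma_{x_i}^2}\,,\quad \frac{\partial r_i}{\partial \overline{Y^2}}=-\frac{\rho_i}{2\sigma_y^2}\,,\quad \frac{\partial r_i}{\partial \overline{X_iY}}=\frac{1}{\sigma_{x_i}\sigma_y}\,,
\end{equation*}
together with the analogous expressions for $r_j$; substituting these and the entries of $\Psi$ into the bilinear form reproduces exactly the grouped left-hand side of \eqref{eq:sufficient conditions}, the three outer factors being the coordinates of $\nabla g_j$ and the three inner summands being the action of $\nabla g_i$ against the corresponding column of $\Psi$. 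The main obstacle is bookkeeping rather than conceptual: one must carefully track the five moment coordinates, confirm that the first-moment directions drop out, and verify that the expansion of $[\nabla g_i]^T\Psi[\nabla g_j]$ matches the stated grouping term-by-term. A secondary point requiring care is the equivalence between ``asymptotically independent'' and ``diagonal limiting covariance,'' which relies on the joint limit being a genuine non-degenerate Gaussian and on the positivity of the Fisher factors $\phi'(\rho_i),\phi'(\rho_j)$ ensured by $|\rho_i|,|\rho_j|<1$.
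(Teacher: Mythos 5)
Your proposal is correct and follows essentially the same route as the paper: a multivariate CLT on the vector of empirical moments, the observation that the mean-zero hypothesis annihilates the first-moment directions in the gradient, the multivariate delta method carrying the limit through to $(\phi(r_i^n),\phi(r_j^n))$, and the equivalence of asymptotic independence with a vanishing off-diagonal entry via positivity of $\phi'$ and the Gaussian zero-covariance criterion. The only difference is presentational — you compose the moment-to-correlation maps into a single delta-method application restricted to the five relevant coordinates per pair, whereas the paper chains three successive delta-method steps ($\eta$, $\gamma$, $\phi^*$) on the full $(3k+2)$-dimensional moment vector — which yields the identical bilinear form $[\nabla g_i]^T\Psi[\nabla g_j]$ of Equation \eqref{eq:sufficient conditions}.
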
 
		\begin{proof}
			Consider the following notations:
			\begin{equation*}
			m_{x_i}:=\frac{1}{n}\sum_{j=1}^{n}X_{ij} \ \ , \ \ \forall i=1,\ldots,k
			\end{equation*}
			\begin{equation*}
			m_{x_i^2}:=\frac{1}{n}\sum_{j=1}^{n}X^2_{ij} \ \ , \ \ \forall i=1,\ldots,k
			\end{equation*}
			\begin{equation*}
			m_{x_iy}:=\frac{1}{n}\sum_{j=1}^{n}X_{ij}Y_j \ \ , \ \ \forall i=1,\ldots,k
			\end{equation*}
			\begin{equation*}
			m_{y}:=\frac{1}{n}\sum_{j=1}^{n}Y_j
			\end{equation*}
			\begin{equation*}
			m_{y^2}:=\frac{1}{n}\sum_{j=1}^{n}Y_j^2
			\end{equation*} 
			\begin{equation*}
			s^2_{x_i}:=m_{x_i^2}-m^2_{x_i} \ \ , \ \ \forall i=1,\ldots,k
			\end{equation*} 
			\begin{equation*}
			s_{x_iy}:=m_{x_iy}-m_{x_i}m_{y}\ \ , \ \ \forall i=1,\ldots,k
			\end{equation*} 
			\begin{equation*}
			s^2_y:=m_{y^2}-m^2_{y}\,.
			\end{equation*} 
			Using these notations, the sample correlations can be written as:
			\begin{equation*}
			r_i:=r_n^i=\frac{s_{x_iy}}{s_{x_i}s_y}\ \ ,i=1,\ldots, k\,.
			\end{equation*}
			It is given that almost surely, $F_\theta$ is associated with finite first four moments and hence the multivariate central limit theorem implies that 
			\begin{equation*}
			\sqrt{n}\begin{bmatrix}
			\begin{pmatrix}m_{x_1}\\\vdots\\m_{x_k}\\m_y\\m_{x_1^2}\\\vdots\\m_{x_k^2}\\m_{y^2}\\m_{x_1y}\\\vdots\\m_{x_ky}
			\end{pmatrix}-\begin{pmatrix}
			0\\\vdots\\0\\0\\\sigma_{x_1}^2\\\vdots\\\sigma_{x_k}^2\\\sigma_y^2\\\sigma_{x_1y}\\\vdots\\\sigma_{x_ky}
			\end{pmatrix}
			\end{bmatrix}\xrightarrow[n\rightarrow\infty]{\mathcal{L}}N_{3k+2}(0,\Sigma^1) \ \ , \ \ \mathbb{P}-a.s.
			\end{equation*}
			where the covariance matrix $\Sigma^1$ is given by $\Sigma_{ij}^1=C(Z_i,Z_j), 1\leq i,j\leq 3k+2$ and $Z$ is as follows: 
			\begin{equation*}
			Z=(X_1,\ldots,X_k,Y,X_1^2,\ldots,X_k^2,Y^2,X_1Y,\ldots,X_kY)^T\,.
			\end{equation*}
			Define a function $\eta:\mathbb{R}^{3k+2}\rightarrow\mathbb{R}^{2k+1}$ by
			\begin{equation*}
			\eta(z)=\begin{pmatrix}
			z_{k+1+1}-z^2_{1}\\\vdots\\z_{k+1+k+1}-z^2_{k+1}\\z_{2k+2+1}-z_{k+1}z_1\\\vdots\\ z_{2k+2+k}-z_{k+1}z_k
			\end{pmatrix}
			\end{equation*}
			and notice that $\eta(\cdot)$ satisfies 
			\begin{enumerate}
				\item 
				\begin{equation*}
				\eta\begin{bmatrix}
				\begin{pmatrix}m_{x_1}\\\vdots\\m_{x_k}\\m_y\\m_{x_1^2}\\\vdots\\m_{x_k^2}\\m_{y^2}\\m_{x_1y}\\\vdots\\m_{x_ky}
				\end{pmatrix}
				\end{bmatrix}=\begin{pmatrix}
				s^2_{x_1}\\\vdots\\s^2_{x_k}\\s^2_y\\s_{x_1y}\\\vdots\\s^2_{x_ky}
				\end{pmatrix}
				\end{equation*}
				\item
				\begin{equation*}
				\nabla \eta(z)=\begin{pmatrix}
				-2diag(z_1,\ldots,z_{k+1}) & B  \\ I_{k+1\times k+1} & O_{k+1\times k} \\
				0_{k\times k+1} & I_{k\times k}
				\end{pmatrix}
				\end{equation*}
				where the matrix $B$ is given by
				\begin{equation*}
				B:=\begin{pmatrix}
				-z_{k+1}\cdot I_{k\times k} \\
				-z_1,\ldots,-z_k
				\end{pmatrix}\,.
				\end{equation*} 
			\end{enumerate} 
			At this stage, apply the multivariate delta method in order to obtain the limit
			\begin{equation*}
			\sqrt{n}\begin{bmatrix}
			\begin{pmatrix}
			s^2_{x_1}\\\vdots\\s^2_{x_k}\\s^2_y\\s_{x_1y}\\\vdots\\s^2_{x_ky}
			\end{pmatrix}-\begin{pmatrix}
			\sigma^2_{x_1}\\\vdots\\\sigma^2_{x_k}\\\sigma^2_y\\\sigma_{x_1y}\\\vdots\\\sigma^2_{x_ky}
			\end{pmatrix}
			\end{bmatrix}\xrightarrow[n\rightarrow\infty]{\mathcal{L}}N_{2k+1}(0,\Sigma^2) \ \ , \ \ \mathbb{P}-a.s.
			\end{equation*}
			where $\Sigma^2$ is given by
			\begin{equation*}
			\Sigma^2=\nabla \eta^T\begin{bmatrix}
			\begin{pmatrix}
			0\\\vdots\\0\\0\\\sigma_{x_1}^2\\\vdots\\\sigma_{x_k}^2\\\sigma_y^2\\\sigma_{x_1y}\\\vdots\\\sigma_{x_ky}
			\end{pmatrix}
			\end{bmatrix}\Sigma^1\nabla \eta\begin{bmatrix}
			\begin{pmatrix}
			0\\\vdots\\0\\0\\\sigma_{x_1}^2\\\vdots\\\sigma_{x_k}^2\\\sigma_y^2\\\sigma_{x_1y}\\\vdots\\\sigma_{x_ky}
			\end{pmatrix}
			\end{bmatrix}\,.
			\end{equation*}
			Notice that for the  vector of inputs written above $\nabla\eta$ is given by 
			
			\begin{equation*}
			\begin{pmatrix}
			O_{k+1\times k+1} & O_{k+1\times k} \\
			I_{k+1\times k+1} & O_{k+1\times k} \\ O_{k\times k+1} & I_{k\times k}
			\end{pmatrix}
			\end{equation*}
			
			, i.e.  $\Sigma^2$ equals to the down-right $2k+1\times2k+1$ block of $\Sigma^1$. Considering this result, define another function $\gamma:\mathbb{R}^{2k+1}\rightarrow\mathbb{R}^k$ as follows
			
			\begin{equation*}
			\gamma(v)=\begin{pmatrix}
			\frac{v_{k+1+1}}{\sqrt{v_{k+1}v_1}}\\\vdots\\\frac{v_{k+1+k}}{\sqrt{v_{k+1}v_k}}
			\end{pmatrix}
			\end{equation*}
			
			and observe that 
			
			\begin{enumerate}
				\item
				\begin{equation*}
				\gamma\begin{bmatrix}
				\begin{pmatrix}
				s^2_{x_1}\\\vdots\\s^2_{x_k}\\s^2_y\\s_{x_1y}\\\vdots\\s^2_{x_ky}
				\end{pmatrix}
				\end{bmatrix}=\begin{pmatrix}
				r_1\\\vdots\\r_k
				\end{pmatrix}
				\end{equation*}
				\item
				\begin{equation*}
				\nabla\gamma^T(v)=\Big[A|B|C\Big]
				\end{equation*}
				where the matrices $A,B$ and $C$ are given by
				\begin{equation*}
				\ \ \ \ \ \   A:=-diag\Big(\frac{v_{k+1+1}}{2\sqrt{v_1^3v_{k+1}}},\ldots,\frac{v_{k+1+k}}{2\sqrt{v_k^3v_{k+1}}}\Big)
				\end{equation*}
				\begin{equation*}
				\ \ B:=-\Big(\frac{v_{k+1+1}}{2\sqrt{v_1v^3_{k+1}}},\ldots,\frac{v_{k+1+k}}{2\sqrt{v_kv^3_{k+1}}}\Big)^T
				\end{equation*}
				\begin{equation*}
				C:=diag\Big(\frac{1}{\sqrt{v_1v_{k+1}}},\ldots,\frac{1}{\sqrt{v_1v_{k+1}}})\,.
				\end{equation*}
			\end{enumerate}
			Therefore, by using the multivariate delta method once again, obtain the limit
			\begin{equation*}
			\sqrt{n}\begin{bmatrix}
			\begin{pmatrix}
			r_1\\\vdots\\r_k
			\end{pmatrix}-\begin{pmatrix}
			\rho_1\\\vdots\\\rho_k
			\end{pmatrix}
			\end{bmatrix}\xrightarrow[n\rightarrow\infty]{\mathcal{L}}N_k[0,\Sigma^3] \ \ , \ \ \mathbb{P}-a.s.
			\end{equation*}
			where $\Sigma^3$ is given by
			
			\begin{equation*}
			\Sigma^3=\nabla \gamma^T\begin{bmatrix}
			\begin{pmatrix}
			\sigma_{x_1}^2\\\vdots\\\sigma_{x_k}^2\\\sigma_y^2\\\sigma_{x_1y}\\\vdots\\\sigma_{x_ky}
			\end{pmatrix}
			\end{bmatrix}\Sigma^2\nabla \gamma\begin{bmatrix}
			\begin{pmatrix}
			\sigma_{x_1}^2\\\vdots\\\sigma_{x_k}^2\\\sigma_y^2\\\sigma_{x_1y}\\\vdots\\\sigma_{x_ky}
			\end{pmatrix}
			\end{bmatrix}\,.
			\end{equation*}
			Define $\phi^*:(-1,1)^k\rightarrow\mathbb{R}^k$ as follows
			\begin{equation*}
			\phi^*(w):=\big(\phi(w_1),\ldots,\phi(w_k)\big)^T
			\end{equation*}
			and notice that $\phi(\cdot)$ is differentiable in its domain and hence its derivative matrix is given by
			\begin{equation*}
			\nabla\phi^*(w)=diag[\phi'(w_1),\ldots,\phi'(w_k)]\,.
			\end{equation*}
			If so, one more execution of the multivariate delta method implies that
			\begin{equation*}
			\sqrt{n}\begin{bmatrix}
			\begin{pmatrix}
			\phi(r_1)\\\vdots\\\phi(r_k)
			\end{pmatrix}-\begin{pmatrix}
			\phi(\rho_1)\\\vdots\\\phi(\rho_k)
			\end{pmatrix}
			\end{bmatrix}\xrightarrow[n\rightarrow\infty]{\mathcal{L}}N_K[0,\Sigma^4] \ \ , \ \ \mathbb{P}-a.s.
			\end{equation*}
			where $\Sigma^4$ is given by
			\begin{equation*}
			[\Sigma^4]_{ij}=\phi'(\rho_i)\phi'(\rho_j)[\Sigma^3]_{ij}\ \ , \ \  1\leq i,j\leq k\,.
			\end{equation*}	
			Here, it can be seen that $\phi'(\cdot)$ is positive for any possible input and hence, because non-correlation is equivalent to independence under Gaussian law, then for any $i\neq j$ asymptotic independence of $\phi(r_i)$ and $\phi(r_j)$ is equivalent to $[\Sigma^3]_{ij}=0$. To see how the needed result stems from this understanding, for simplicity and w.l.o.g, consider the case where $i=1$ and $j=2$. In this case $r_1$ and $r_2$ are asymptotically independent iff the following equation holds
			\begin{equation*} 
			\Sigma^3_{12}=\begin{pmatrix}
			-\frac{\rho_1}{2\sigma^2_{x_1}}\\ 0 \\ \vdots \\ 0 \\-\frac{\rho_1}{2\sigma^2_y}\\  \frac{1}{\sigma_{x_2}\sigma_y} \\ 0 \\ \vdots \\ 0
			\end{pmatrix}^T
			\Sigma^2\begin{pmatrix}
			0\\ -\frac{\rho_2}{2\sigma^2_{x_2}} \\ \vdots \\ 0 \\-\frac{\rho_2}{2\sigma^2_y}\\  0 \\ \frac{1}{\sigma_{x_2}\sigma_y} \\ \vdots \\ 0
			\end{pmatrix}=0\,.
			\end{equation*}
			
		\end{proof}
		
		\begin{remark}
			In fact, Theorem \eqref{thm:sufficient condition} specifies sufficient conditions under which almost surely $\phi(r_n^1),\ldots,\phi(r_n^k)$ are asymptotically independent univariate Gaussians.
		\end{remark}
		\begin{thm} \label{thm:gaussianity}
			Let $\Sigma \sim G$ and $(X_1,\ldots, X_k,Y|\Sigma)\sim N_{k+1}(0,\Sigma)$. If $G$
			satisfies Assumption \eqref{ass:1}, then Assumption \eqref{ass:2} is violated with positive probability.	
		\end{thm}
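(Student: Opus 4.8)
The plan is to reduce the claim to the characterisation of Assumption \eqref{ass:2} supplied by Theorem \eqref{thm:sufficient condition}. Conditionally on $\Sigma$ the vector $(X_1,\ldots,X_k,Y)$ is centred Gaussian, hence has finite moments of every order and mean zero, so the hypotheses of Theorem \eqref{thm:sufficient condition} are met almost surely. Consequently Assumption \eqref{ass:2} holds if and only if the left-hand side of \eqref{eq:sufficient conditions} vanishes almost surely for every pair $1\le i<j\le k$, and it suffices to exhibit a single pair, say $(1,2)$, for which this quantity is nonzero on an event of positive probability.

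First I would evaluate the covariance terms appearing in \eqref{eq:sufficient conditions} explicitly under the Gaussian law. Because correlations are scale invariant there is no loss of generality in taking unit variances, so that $\rho_1=\mathbb{E}[X_1Y]$, $\rho_2=\mathbb{E}[X_2Y]$ and $c:=\rho_{x_1x_2}=\mathbb{E}[X_1X_2]$. Applying Isserlis' (Wick's) theorem to the relevant fourth moments yields closed forms such as $C(X_1^2,X_2^2)=2c^2$, $C(X_1Y,X_2^2)=2c\rho_2$ and $C(X_1Y,X_2Y)=c+\rho_1\rho_2$, together with the remaining analogous evaluations. Substituting these into \eqref{eq:sufficient conditions} and collecting terms, I expect the whole expression to collapse to the quadratic
\begin{equation*}
Q(c)=\tfrac{1}{2}\rho_1\rho_2\,c^2+(1-\rho_1^2-\rho_2^2)\,c+\tfrac{1}{2}\rho_1\rho_2(\rho_1^2+\rho_2^2-1)
\end{equation*}
in the single free variable $c$, the remaining correlations $\rho_1,\rho_2$ entering only as coefficients.

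The decisive step is then an elementary analysis of $Q$. Its discriminant factors as
\begin{equation*}
\Delta=(1-\rho_1^2-\rho_2^2)(1-\rho_1^2)(1-\rho_2^2),
\end{equation*}
and since $|\rho_1|,|\rho_2|<1$ the last two factors are strictly positive, so $\mathrm{sign}(\Delta)=\mathrm{sign}(1-\rho_1^2-\rho_2^2)$. Hence on the event $\{\rho_1^2+\rho_2^2>1\}$ one has $\Delta<0$, while the leading coefficient $\tfrac12\rho_1\rho_2$ is nonzero there (both correlations must be nonzero once $\rho_1^2+\rho_2^2>1$); therefore $Q$ has no real root and $Q(c)\neq 0$ for every admissible value of $\rho_{x_1x_2}$. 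To conclude I would note that Assumption \eqref{ass:1} forces $\rho_1,\rho_2$ to be i.i.d. with $\rho_i=\tanh(Z_i)$, $Z_i\sim N(0,\sigma_q^2)$, so they have full support in $(-1,1)$ and $\mathbb{P}\{\rho_1^2+\rho_2^2>1\}>0$. On this positive-probability event the left-hand side of \eqref{eq:sufficient conditions} is nonzero, so by Theorem \eqref{thm:sufficient condition} Assumption \eqref{ass:2} fails, which is precisely the assertion.

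I anticipate the only genuine obstacle to be the bookkeeping in the second step: there are many fourth-moment covariances to compute and assemble, with several opportunities for sign errors, and one must track the variance normalisations (or justify the scale-invariance reduction) to be sure the $\sigma$-factors in \eqref{eq:sufficient conditions} cancel as claimed. Once the expression is correctly reduced to $Q(c)$, the discriminant computation and the support argument are routine.
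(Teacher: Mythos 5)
Your proof is correct, and its skeleton coincides with the paper's: both reduce the claim to Theorem \eqref{thm:sufficient condition} applied to the pair $(1,2)$, evaluate every covariance in \eqref{eq:sufficient conditions} via Isserlis' theorem under the conditional Gaussian law, and collapse the condition to a quadratic in $c=\rho_{x_1x_2}$ --- your $Q(c)$ is exactly the paper's Equation \eqref{eq:sufficient 2}, since $\rho_1\rho_2^3+\rho_1^3\rho_2-\rho_1\rho_2=\rho_1\rho_2(\rho_1^2+\rho_2^2-1)$. Where you genuinely depart is the closing step. The paper spot-checks the single point $(\rho_1,\rho_2)=(0.5,0.9)$, computes the discriminant numerically ($-0.0085$), and then needs a continuity argument to inflate that point to a ball $B_\delta(0.5,0.9)$ of positive prior probability. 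You instead factor the discriminant in closed form, $\Delta=(1-\rho_1^2-\rho_2^2)(1-\rho_1^2)(1-\rho_2^2)$, which is correct (with $s=1-\rho_1^2-\rho_2^2$ one gets $\Delta=s^2+\rho_1^2\rho_2^2\,s=s\,(1-\rho_1^2)(1-\rho_2^2)$, and it reproduces the paper's numerics: $s=-0.06$, $(0.75)(0.19)=0.1425$, product $\approx-0.00855$). This buys three things: an exact description of the failure event as $\{\rho_1^2+\rho_2^2>1\}$ rather than an unquantified ball; elimination of the continuity step entirely; and a cheaper handling of the leading coefficient, since $\rho_1^2+\rho_2^2>1$ with $|\rho_i|<1$ already forces $\rho_1\rho_2\neq0$, whereas the paper separately invokes $\mathbb{P}\{\rho_i\neq0\}=1$ from Assumption \eqref{ass:1}. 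Your two flagged caveats are real but harmless: the unit-variance reduction is legitimate because each term of \eqref{eq:sufficient conditions} is homogeneous of degree zero in $(\sigma_{x_i},\sigma_{x_j},\sigma_y)$, as the paper implicitly confirms by landing on \eqref{eq:sufficient 2} expressed purely in correlations; and full support of $(\rho_1,\rho_2)=(\tanh Z_1,\tanh Z_2)$ with $Z_i\stackrel{i.i.d.}{\sim}N(0,\sigma_q^2)$ immediately gives $\mathbb{P}\{\rho_1^2+\rho_2^2>1\}>0$, since that event contains a nonempty open subset of $(-1,1)^2$ --- this is the same final appeal to Assumption \eqref{ass:1} that the paper makes, just applied to an explicitly identified region.
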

		\begin{proof}
			For simplicity and w.l.o.g. it is enough to show that the event of having $r^1_n$ and $r^2_n$ which are not asymptotically independent occurs with positive probability. To do so, consider $i=1$ and $j=2$, and notice that due to the previous theorem, it is enough to prove that Equation \eqref{eq:sufficient conditions} doesn't hold with positive probability. Now, $(X_1,X_2,Y)$ is a Gaussian and hence, as was shown by  \cite{Isserlis1918}, each of the covariances appeared in Equation \eqref{eq:sufficient conditions} can be expressed as follows
			
			\begin{equation*}
			C(X_1^2,X_2^2)=\mathbb{E}(X_1^2X_2^2)-\mathbb{E}(X_1^2)\mathbb{E}(X_2^2)=\sigma_{x_1}^2\sigma_{x_2}^2+2\sigma_{x_1x_2}^2-\sigma_{x_1}^2\sigma_{x_2}^2=2\sigma_{x_1x_2}^2  \ \ \ \ \ \ \ \ \ \ \ \ \ \ \ \ \ \ \ \
			\end{equation*}
			\begin{equation*}	C(Y,X_2^2)=\ldots=2\sigma_{x_2y}^2 \ \ \ \ \ \ \ \ \ \ \ \ \ \ \ \ \ \ \ \ \ \ \ \ \ \ \ \ \ \ \ \ \ \ \ \ \ \ \ \ \ \ \ \ \ \ \ \ \ \ \ \ \ \ \ \ \ \ \ \  \ \ \ \ \  \ \ \  \ \ \ \ \ \ \ \ \ \ \ \ \ \ \ \ \ \ \ \ \ \ 
			\end{equation*}
			\begin{equation*}
			C(X_1Y,X_2^2)=\mathbb{E}(X_1YX_2^2)-\mathbb{E}(X_1Y)\mathbb{E}(X_2^2)=\sigma_{x_2}^2\sigma_{x_1y}+2\sigma_{x_1x_2}\sigma_{x_2y}-\sigma_{x_1y}\sigma_{x_2}^2=2\sigma_{x_1x_2}\sigma_{x_2y}
			\end{equation*}
			\begin{equation*}
			C(X_1^2,Y)=\ldots=2\sigma_{x_1y}^2 \ \ \ \ \ \ \ \ \ \ \ \ \ \ \ \ \ \ \ \ \ \ \ \ \ \ \ \ \ \ \ \ \ \ \ \ \ \ \ \ \ \ \ \ \ \ \ \ \ \ \ \ \ \ \ \ \ \ \ \  \ \ \ \ \  \ \ \ \ \ \ \ \ \ \ \ \ \ \ \ \ \ \ \ \ \ \ \ \ \ 
			\end{equation*}
			\begin{equation*}
			C(Y^2,Y^2)=\mathbb{E}Y^4-\mathbb{E}^2\mathbb{E}Y^2=3\sigma_y^4-\sigma_y^2\sigma_y^2=2\sigma_y^4 \ \ \ \ \ \ \ \ \ \ \ \ \ \ \ \ \ \ \ \ \ \ \ \ \ \ \ \ \ \ \ \ \ \ \ \ \ \ \ \ \ \ \ \ \ \ \ \ \ \ \ \ \ \ \ \ \ \ 
			\end{equation*}
			\begin{equation*}
			C(X_1Y,Y^2)=\mathbb{E}(X_1Y^3)-\mathbb{E}(X_1Y)\mathbb{E}(Y^2)=3\sigma_y^2\sigma_{x_1y}-\sigma_{x_1y}\sigma^2_y=2\sigma^2_y\sigma_{x_1y} \ \ \ \ \ \ \ \ \ \ \ \ \ \ \ \ \ \ \ \ \ \ \ \ \ \ \ \ \ \ \ \ \ \ \ \ \ \ \ \ \ \ \ \ \ \ \ \ \ \ \ \ \ \ \ \ \ \ \ \  \ \ \ \ \  \ \ \ \ \ \  
			\end{equation*}
			\begin{equation*}
			C(X_1^2,X_2Y)=\ldots=2\sigma_{x_1x_2}\sigma_{x_1y}^2 \ \ \ \ \ \ \ \ \ \ \ \ \ \ \ \ \ \ \ \ \ \ \ \ \ \ \ \ \ \ \ \ \ \ \ \ \ \ \ \ \ \ \ \ \ \ \ \ \ \ \ \ \ \ \ \ \ \ \ \  \ \ \ \ \  \ \ \ \ \ \ \ \ \ \ \ \ \ \ \ \ \ \ \ \ \ \ \ \
			\end{equation*}
			\begin{equation*}
			C(Y,X_2Y)=\ldots=2\sigma_y^2\sigma_{x_2y} \ \ \ \ \ \ \ \ \ \ \ \ \ \ \ \ \ \ \ \ \ \ \ \ \ \ \ \ \ \ \ \ \ \ \ \ \ \ \ \ \ \ \ \ \ \ \ \ \ \ \ \ \ \ \ \ \ \ \ \  \ \ \ \ \  \ \ \ \ \ \ \ \ \ \ \ \ \ \ \ \ \ \ \ \ \ \ \ \ \ \ \ \ \ \ 
			\end{equation*}
			\begin{equation*}
			C(X_1Y,X_2Y)=\mathbb{E}(X_1X_2Y^2)-\mathbb{E}(X_1Y)\mathbb{E}(X_2Y)=\sigma^2_y\sigma_{x_1x_2}+2\sigma_{x_1y}\sigma_{x_2y}-\sigma_{x_1y}\sigma_{x_2y}= \ \ \ \ \ \ \ \ \ \ \ \ \ \ \ \ 
			\end{equation*}
			\begin{equation*}
			=\sigma_y^2\sigma_{x_1x_2}+\sigma_{x_1y}\sigma_{x_2y}
			\end{equation*}
			where $\sigma_{x_1}^2:=V(X_1)$,  $\sigma_{x_2}^2:=V(X_2)$,   $\sigma_{x_1x_2}:=C(X_1,X_2)$,  $\sigma_{x_1y}:=C(X_1,Y)$ and $\sigma_{x_2y}:=C(X_2,Y)$. 
			By insertion of these expressions into Equation \eqref{eq:sufficient conditions}, a sufficient and necessary condition for asymptotic independence of $r_n^1$ and $r_n^2$ is given by:

			\begin{equation} \label{eq:sufficient 2}
			\rho_{x_1x_2}^2\frac{\rho_1\rho_2}{2}+\rho_{x_1x_2}(1-\rho_1^2-\rho_2^2)+\frac{\rho_1\rho_2^3+\rho_1^3\rho_2-\rho_1\rho_2}{2}=0
			\end{equation}
			
			where $\rho_{x_1x_2}:=\sigma_{x_1,x_2}/\sqrt{\sigma^2_{x_1}\sigma^2_{x_2}}$. The next step is to show that with positive probability, $(\rho_1,\rho_2)\in(-1,1)^2$ is such that Equation \eqref{eq:sufficient 2} has no solution. To see this, since $\phi(x)=0$ iff $x=0$, then Assumption \eqref{ass:1} implies that $\mathbb{P}\{\rho_i\neq0,\forall i=1,2\}=1$. Therefore, Equation \eqref{eq:sufficient 2} is almost surely a quadratic equation w.r.t. $\rho_{x_1x_2}$ that, depending on the values of $\rho_1$ and $\rho_2$, might not have a solution. Indeed, if $(\rho_1,\rho_2)=(0.5,0.9)\in(-1,1)^2$, then the discriminant of the quadratic equation equals to $-0.0085<0$.  
			
			Now, the fact that the discriminant of the quadratic equation is continuous in $\rho_1$ and $\rho_2$ at the point (0.5,0.9) implies that there exists some $\delta>0$ such that the discriminant is negative for any $(\rho_1,\rho_2)\in B_{\delta}(0.5,0.9)\subset(-1,1)^2$. By Assumption \eqref{ass:1}, $G$ is a prior such that $\phi(\rho_1),\ldots,\phi(\rho_k)\stackrel{iid}{\sim}N(0,\sigma_q^2)$ and hence
			\begin{equation*}
			\mathbb{P}\bigg(\{\phi(\rho_1),\phi(\rho_2)\}\subset
			\phi\big(B_\delta(0.5,0.9)\big)\bigg)>0 \ \ \footnotetext{For any set $\mathbf{X}\subseteq(-1,1)^2$, the notation $\phi(\mathbf{X})$ refers to the image of $\mathbf{X}$ by the Fisher transformation $\phi(\cdot)$.}
			\end{equation*}
			where $\mathbb{P}(\cdot)$ is the probability measure which is associated with the distribution $G(\cdot)$. Since $\phi(\cdot)$ is strictly increasing, there exists a strictly monotone inverse $\phi^{-1}(\cdot)$ which means that    
			\begin{equation*}
			\mathbb{P}\bigg(\{\rho_1,\rho_2\}\subset B_\delta(0.5,0.9)\bigg)>0
			\end{equation*}
			, i.e. with positive probability there is no solution for Equation \eqref{eq:sufficient 2}. 	
			
		\end{proof}
		
		\begin{remark}
			Theorems \eqref{thm:sufficient condition} and \eqref{thm:gaussianity}  were done regarding the special case where $\int_{\mathbb{R}^{k+1}} xF_\theta(dx)=\underline{0},\forall\theta\in\Theta$. In practice, the data is normalized and hence this assumption naturally stems from practical conventions.  
		\end{remark}
		
		\begin{remark}
			Theorems \eqref{thm:sparsity} and \eqref{thm:gaussianity} remain valid even when Assumption \eqref{ass:1} is phrased as follows: $\phi(\rho_1),\ldots , \phi(\rho_k)\stackrel{i.i.d}{\sim}q(\cdot)$
			such that $q(\cdot)$ is a density function which is supported on $\mathbb{R}$.
		\end{remark}
		
		\section{Testing Model Assumptions}\label{sec:4}
		Since the determination of \cite{ein-dor} and \cite{zuk} regarding the huge extent of the needed sample size is based on the existence of the Assumptions \eqref{ass:1}-\eqref{ass:3}, they implemented an EB methodology to validate these assumptions by empirical data. This testing methodology is described in the supplementary materials of \cite{ein-dor} as well as in Section 5 of \cite{zuk}. Practically, it suggests to conduct a visual checking to see whether the empirical distribution
		of $\phi(r^1_n),\ldots,\phi(r^k_n)$ looks like a Gaussian. With respect to this methodology, the current section
		presents an example of a model which strongly violates the setup of Section \eqref{sec:2} but on the same time generates Fisher-transformed sample correlations whose empirical
		distribution seems Gaussian. Then, it is shown that, for this specific model,
		application of the suggested methodology in order to evaluate the required
		sample-size returns too pessimistic evaluation.
		\subsection{Model Setup} \label{subsec:model-setup}
		Consider the case where there are $k>>1$ genes. In addition,
		let $\beta=(\beta_1,\ldots, \beta_k)$ be a vector which is distributed uniformly over the set of $k$ dimensional vectors that include $0<u<<k$ ones and $k-u$
		zeros ($u$ is known). Then, let $X=(X_1,\ldots,X_k|\beta)\sim N_k(0, I)$ and set
		$Y |(\beta,X)=\sum_{i=1}^k\beta_iX_i$. 
		
		Now, it is an immediate insight that this model strongly violates Assumption \eqref{ass:1}
		because the distribution of
		$\phi(\rho_1),\ldots,\phi(\rho_k)$
		is not continuous. However, as it
		seems from Figure (1), $\phi(r^1_n),\ldots,\phi(r^k_n)$ are distributed according to
		some centred Gaussian law and hence, the testing methodology states that the model assumptions may
		be carried out.
		\newpage
		\begin{remark}
			Notice that this model captures the following characteristics of gene-expression datasets:
			
			\begin{enumerate}
				\item There are many genes, i.e. $k>>1$ (20000 in humans)
				
				\item Only a small fraction of the genes are correlated with the target variable.
				
				\item  Those genes that are correlated with the target variable are associated
				with low values of absolute correlation.
			\end{enumerate} 
		\end{remark}
		
		\subsection{Straightforward vs. Approximated Computation}
		Generally speaking, given a dataset, i.e. a set of $n$ i.i.d realizations from the Bayesian model depicted by Subsection \eqref{sec:2}, a reasonable procedure to pinpoint the $u$ genes which are associated with the positive values of absolute correlation
		with the target variable is to compute the absolute values of the sample-correlations of all genes with
		the target variable and pick the $u$ genes whose absolute sample correlations are the highest. With regard to this selection procedure, the goal is to provide estimates of the expectation and standard deviation of the
		proportion of genes that are selected correctly. However, as pointed out by \cite{ein-dor} and \cite{zuk}, this calculation isn't trivial analytically. Moreover, even if the model setup is quite simple, straightforward MC simulation may require non-negligible running time. Therefore, since it has already been showed that the testing methodology wrongly stated that the simulated data from the model of subsection \eqref{subsec:model-setup} satisfies these assumptions, they suggested a fast approximated approach to calculate these estimates. Figure (2) exhibits a comparison between the results of a straightforward MC estimates and the fast approximated approach and indeed, it shows that an overestimation of the needed sample size is incurred.
		
		\section{Summary and Further Research}\label{sec:5}
		A very interesting feature of the model presented by Section \eqref{sec:2}, is the way in which it is defined by indirect assumptions over the data generating process (DGP). This work shows how the class of possible DGP's may be extracted from such an indirect setup. In addition, it has been demonstrated that the methodology of \cite{ein-dor} and \cite{zuk} to test the model assumptions may not detect severe violations of Assumptions \eqref{ass:1}-\eqref{ass:3}. These findings lead to the following directions for further research:
		
		\begin{enumerate}
			\item Development of Bayesian models which satisfy Assumptions \eqref{ass:1}-\eqref{ass:3} and are not too complex from a statistical point of view. Especially, besides the mathematical requirements which must be satisfied, they should also be flexible in the sense that they capture the essence behind the informative richness of gene-expression micro-array datasets.
			
			\item Development of better methodologies for testing Assumptions \eqref{ass:1}-\eqref{ass:3}. With respect to this point, notice that such development should be done in the context of the EB literature. Otherwise, if the classical Bayesian framework is adapted, then such prior assumptions are part of a belief system which conceptually can't be checked empirically.
			
			\item Comparative research between the methodology presented by \cite{ein-dor} and \cite{zuk} and the techniques
			which are studied by the literature of ranking and selection (R\&S) procedures. Generally speaking, this literature investigates procedures for ranking and selection from stochastic populations by their statistical properties
			such as mean, variance, R-squared with a target variable, etc. To see
			the relevancy of this literature, observe that the question of ranking
			and selection of features by their absolute correlations (or equivalently by their R-squared) with some target variable
			was investigated by \cite{alam}, \cite{alam76}, \cite{levy75}, \cite{levy77}, \cite{ramberg77}, \cite{rizvi73} and \cite{wilcox} in the context of this literature.
			To motivate such a comparative research, there exists a modern literature, e.g. \cite{cui08} and \cite{cui10}, that raised the question of how to apply R\&S theory to gene-expression micro-array datasets? 
			
			Finally, all the simulation results that were exhibited here are done by R program whose code is available at \textbf{\textit{https://github.com/royija/thousands-of-samples}.} 
		\end{enumerate}
		
		\bibliographystyle{abbrv}
		\bibliography{thousands}

\begin{thebibliography}{10}

\bibitem{alam}
K.~Alam.
\newblock Distribution of sample correlation coefficients.
\newblock {\em Naval Research Logistics}, 26(2):327--330, 1979.

\bibitem{cara}
C.~Caravlho, J.~Chang, J.~Lucas, J.~Nevins, Q.~Wang, and M.~West.
\newblock High-dimensional sparse factor modeling: applications in gene
  expression genomics.
\newblock {\em JASA}, 103:1438--1456, 2008.

\bibitem{donoho}
D.~Donoho.
\newblock High-dimensional data analysis: The curses and blessings of
  dimensionality.
\newblock {\em AMS Math Challenges Lecture}, 2000.

\bibitem{ein-dor}
L.~Ein-Dor, O.~Zuk, and E.~Domany.
\newblock Thousands of samples are needed to generate a robust gene list for
  predicting outcome in cancer.
\newblock {\em PNAS}, 103:5923--5928, 2006.

\bibitem{ferg}
T.~Ferguson.
\newblock {\em A course in large sample theory. .}
\newblock Chapman and Hall, London, 1996.

\bibitem{fisher1915}
R.~Fisher.
\newblock Frequency distribution of the values of the correlation coefficient
  in samples from an indefinitely large population.
\newblock {\em Biometrika}, 10(4):507--521, 1915.

\bibitem{fisher1921}
R.~Fisher.
\newblock On the probable error of a coefficient of correlation deduced from a
  small sample.
\newblock {\em Metron}, 1:3--32, 1921.

\bibitem{ghosh}
D.~Ghosh and A.~Chinnaiyan.
\newblock Mixture modelling of gene expression data from microarray
  experiments.
\newblock {\em Bioinformatics}, 18:275--286, 2002.

\bibitem{guyon}
I.~Guyon and A.~Elisseeff.
\newblock An introduction to variable and feature selection.
\newblock {\em The Journal of Machine Learning Research}, 3:1157--1182, 2003.

\bibitem{hall}
M.~Hall.
\newblock {\em Correlation based feature selection for machine learning.}
\newblock PhD thesis, Department of Computer-Science, University of Waikato,
  Hamilton, New-Zealand, 1998.

\bibitem{Isserlis1918}
L.~Isserlis.
\newblock On a formula for the product-moment coefficient of any order of a
  normal frequency distribution in any number of variables.
\newblock {\em Biometrika}, 12(1/2):134--139, Nov 1918.

\bibitem{alam76}
H.~S. K.~Alam, M.~Rizvi.
\newblock Selection of largest multiple correlation coefficients : exact sample
  size case.
\newblock {\em Annals of Statistics}, (4):614--620, 1976.

\bibitem{knowels}
D.~Knowles and Z.~Ghahramani.
\newblock Nonparametric bayesian sparse factor models with application to gene
  expression modeling.
\newblock {\em The Annals of Applied Statistics}, 5(B):1534--1552, 2011.

\bibitem{levy75}
K.~Levy.
\newblock Selecting the best population from among k binomial populations or
  the population with the largest correlation coefficient from among k
  bivariate normal populations.
\newblock {\em Psychometrika}, 40:121--122, 1975.

\bibitem{levy77}
K.~Levy.
\newblock Appropriate sample sizes for selecting a population with the largest
  correlation coefficient from among k bivariate normal populations.
\newblock {\em Educational and Psychological Measurement}, 37:61--66, 1977.

\bibitem{rizvi73}
H.~S. M.H.~Rizvi.
\newblock . selection of largest multiple correlation coefficients: asymptotic
  case.
\newblock {\em Journal of the American Statistical Association}, 68:184--188,
  1973.

\bibitem{ramberg77}
J.~Ramberg.
\newblock Selecting the best predictor variate.
\newblock {\em Communications in Statistics-Theory and Methods}, 11:1133--1147,
  1977.

\bibitem{wilcox}
R.~Wilcox.
\newblock Some comments on selecting the best of several binomial populations
  or the bivariate normal population having the largest correlation
  coefficient.
\newblock {\em Psychometrika}, 43:127--128, 1978.

\bibitem{cui08}
J.~W. X.~Cui.
\newblock On the probability of correct selection for large k populations with
  application to microarray data.
\newblock {\em Biometrical Journal}, 50(5):833--870, 2008a.

\bibitem{cui10}
J.~W. X.~Cui, H.~Zhao.
\newblock Optimized ranking and selection methods for feature selection with
  application in microarray experiments.
\newblock {\em Journal of Biopharmaceutical statistics}, 20(2):223--239, 2010.

\bibitem{yeung}
K.~Yeung, C.~Fraley, A.~Murua, A.~Raftery, and W.~Ruzzo.
\newblock Model-based clustering and data transformations for gene expression
  data.
\newblock {\em Bioinformatics}, 17:977--987, 2001.

\bibitem{yu}
L.~Yu and H.~Liu.
\newblock Feature selection for high-dimensional data: A fast correlation-based
  filter solution.
\newblock {\em Proceedings of the twentieth International Conference on Machine
  Learning,}, page 856–863, 2003.

\bibitem{zuk}
O.~Zuk, L.~Ein-Dor, and E.~Domany.
\newblock Ranking under uncertainty.
\newblock {\em UAI}, pages 466--473, 2007.

\end{thebibliography}

		\begin{figure}[p!]\label{fig:hist}
			\begin{center}
				\includegraphics[height=3.0in, width=5in]{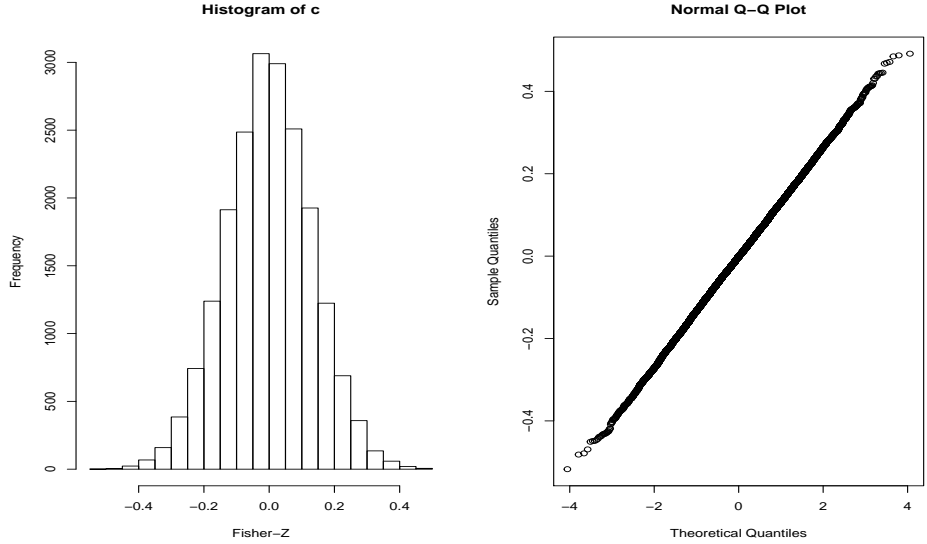} \caption{Histogram and normal QQ-plot of $n=59$ i.i.d observations of the model described by Subsection \eqref{subsec:model-setup} where $u=100$ and $k=20000$. 
				}
			\end{center}
		\end{figure}

		\begin{algorithm}[p!]\label{alg:hist}
			\underline{\textbf{Computation of Figure (1)}}\\
			{\bf Input:} $n,k,u \ \ \ \ \  \ \ \ \ \ \ \ \ \ \ \ \ \ \ \ \ \ \ \ \ \ \ \ \ \ \ \ \ \ \ \ \ \ \ \ \ \ \  \ \ \ \ \ \ \ \ \ \ \ \ \ \ \ \ \ \ \ \ \ \ \ \ \ \ \ \ \ \ \ \ \ \ \ \ \ \ \ \ \ \ \ \ \ \ \ \ \ \ $ \\
			{\bf Output:} Histogram and normal QQ-plot of a random realization of $\phi(r_n^1),\ldots,\phi(r_n^k)$ 
			\begin{enumerate}
				\item \textbf{For} $j=1,\ldots,n$ \textbf{do}\\
				// Draw $(X_{1j},\ldots,X_{k_j})\sim N_k(0,I)$. \\
				// Set $Y_j=\sum_{i=1}^uX_{ij}$. \\
				\item \textbf{End for}.
				\item \textbf{For} $i=1,\ldots,k$ \textbf{do} \\
				// Compute the empirical correlation between the vectors $(X_{i1},\ldots,X_{in})$ and $(Y_1,\ldots,Y_n)$ and denote it by $r_n^i$.
				\item \textbf{End for}.
				\item \textbf{Return} Histogram and normal QQ-plot of $\phi(r_n^1),\ldots,\phi(r_n^k)$.		
			\end{enumerate} 
		\end{algorithm}
		
		\begin{figure}[p!] \label{fig:graph}
			\begin{center}
				
				\includegraphics[height=3.0in, width=5in]{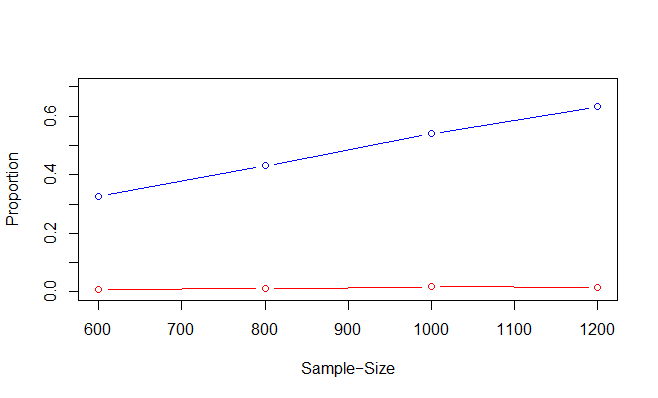} \caption{: This figure presents the estimated relations between the sample-size (horizontal axis) and the expected proportion of correct selections (vertical axis). The blue line is associated with the straightforward approach while the red one describes the estimates of the fast approximated approach. The parameters which were used in order to create this sketch are $B = 10, k = 20000,u = 100$ and the sample sizes are $n = 600, 800, 1000, 1200$. Notice that all of the standard deviations corresponding to this graph are smaller than $0.042$. 
				}
			\end{center}
		\end{figure}
		
		\begin{algorithm}[p!] \label{alg:graph}
			\underline{\textbf{Computation of Figure (2)}}\\
			{\bf Input:} $n, k, u, B \ \ \ \ \  \ \ \ \ \ \ \ \ \ \ \ \ \ \ \ \ \ \ \ \ \ \ \ \ \ \ \ \ \ \ \ \ \ \ \ \ \ \  \ \ \ \ \ \ \ \ \ \ \ \ \ \ \ \ \ \ \ \ \ \ \ \ \ \ \ \ \ \ \ \ \ \ \ \ \ \ \ \ \ \ \ \ \ \ \  $ \\
			{\bf Output:} Estimates of the expectation and standard deviation of the proportion of genes that are selected correctly as computed by the straightforward and fast approximated approaches.
			
			\begin{enumerate}
				\item \textbf{For} $t=1,\ldots,B$ \textbf{do}\\
				\ \ \ \ \ \ \textbf{For} $j=1,\ldots,n$ \textbf{do} \\
				\ \ \ \ \ \ // Draw $(X_{1j},\ldots,X_{k_j})\sim N_k(0,I)$ and set $Y_j=\sum_{i=1}^uX_{ij}$. \\
				\ \ \ \ \ \ \textbf{End for}. \\
				\ \ \ \ \ \ \textbf{For} $i=1,\ldots,k$ \textbf{do} \\
				\ \ \ \ \ \ // Compute the empirical correlation between the vectors $(X_{i1},\ldots,X_{in})$ and      $(Y_1,\ldots,Y_n)$ and denote it by $r_n^i(t)$. \\
				\ \ \ \ \ \ // Compute $d_t=\sum_{i=1}^uI_i$ where $I_i$ indicates whether $|r_n^i(t)|$ is one of the $u$ highest values of the vector $(|r_n^1(t)|,\ldots,|r_n^k(t)|)$. \\
				\ \ \ \ \ \ \textbf{End for}. \\
				\ \ \ \ \ \ // Compute the empirical variance of $r_n^1(t),\ldots,r_n^k(t)$ and denote it by $W$. 
				\\
				\ \ \ \ \ \  // Set $\hat{\sigma}_q=\sqrt{W-\frac{1}{n-3}}$ . \\ \ \ \ \ \ \  // Draw $\phi(\rho_1),\ldots,\phi(\rho_k)\stackrel{i.i.d}{\sim}N(0,\hat{\sigma}_q^2)$ . \\ 
				\ \ \ \ \ \  // Compute the set of indices which are associated with
				the $u$ highest values of the vector $(|\phi(\rho_1)|,\ldots,|\phi(\rho_1)|)$, denote it by $S_1$ and Draw $z_1(t),\ldots,z_k(t)\stackrel{i.i.d}{\sim}N(0,\frac{1}{n-3})$ . . \\
				\textbf{For} $i=1,\ldots,k$ \textbf{do} \\
				\ \ \ \ \ \ // Compute $v_i(t)=|r_n^i(t)+z_i(t)|$ . \\
				\ \ \ \ \ \ \textbf{End for}. \\
				\ \ \ \ \ \ // Compute the set of indexes which are associated with
				the $u$ highest values of the vector $(v_1(t),\ldots,v_k(t))$, denote it by $S_2$ and compute $c_t=\frac{|S_1\cap S_2|}{u}$. \\
				\textbf{End for}.\\ 
				\item Compute $\bar{d}=\frac{1}{B}\sum_{t=1}^Bd_t$ and $\bar{c}=\frac{1}{B}\sum_{t=1}^Bc_t$
				
				\item \textbf{Return} $\bar{d}$, $\bar{c}$, $\sqrt{\frac{1}{B}\sum_{t=1}^B(c_t-\bar{c})^2}$ and $\sqrt{\frac{1}{B}\sum_{t=1}^B(d_t-\bar{d})^2}$
			\end{enumerate}
		\end{algorithm}
		
\end{document}